\newtheorem{theorem}{Theorem}
\theoremstyle{definition}
\theoremstyle{remark}
\title{
RAN Slicing in Multi-MVNO Environment under Dynamic Channel Conditions
}
\author{Darshan A. Ravi~\thanks{This work has been partially supported by DARPA  HR0011-19-C-0096.}, Vijay K. Shah, Chengzhang Li, Tom Hou, and Jeffrey H. Reed \\
Bradley Department of ECE,
Virginia Tech, Blacksburg, VA, 24061 \\
Emails:\{darshan19, vijays, licz17, hou, reedjh\}@vt.edu

}
\begin{document}

\maketitle
\thispagestyle{empty}
\pagestyle{empty}

%%%%%%%%%%%%%%%%%%%%%%%%%%%%%%%%%%%%%%%%%%%%%%%%%%%%%%%%%%%%%%%%%%%%%%%%%%%%%%%%
\begin{abstract}
With the increasing diversity in the requirement of wireless services with guaranteed quality of service(QoS), radio access network(RAN) slicing becomes an important aspect in implementation of next generation wireless systems(5G). RAN slicing involves division of network resources into many logical segments where each segment has specific QoS and can serve users of mobile virtual network operator(MVNO) with these requirements. This allows the Network Operator(NO) to provide service to multiple MVNOs each with different service requirements. Efficient allocation of the available resources to slices becomes vital in determining number of users and therefore, number of MVNOs that a NO can support. In this work, we study the problem of Modulation and Coding Scheme(MCS) aware RAN slicing(MaRS) in the context of a wireless system having MVNOs which have users with minimum data rate requirement. Channel Quality Indicator(CQI) report sent from each user in the network determines the MCS selected, which in turn determines the achievable data rate. But the channel conditions might not remain the same for the entire duration of user being served. For this reason, we consider the channel conditions to be dynamic where the choice of MCS level varies at each time instant. We model the MaRS problem as a Non-Linear Programming problem and show that it is NP-Hard. Next, we propose a solution based on greedy algorithm paradigm. We then develop an upper performance bound for this problem and finally evaluate the performance of proposed solution by comparing against the upper bound under various channel and network configurations.
\end{abstract}

\begin{IEEEkeywords}
5G and Beyond Networks, RAN Slicing, Dynamic Channel Conditions, Performance Bound
\end{IEEEkeywords}

\section{Introduction}
With the advent of Internet of Things (IoT), number of devices accessing the internet has been increasing exponentially. Ericsson has estimated that about 5 billion IoT devices will be connected to the internet and about 2.6 billion 5G subscriptions by the end of 2025~\cite{ericsson}. Efficient utilization of available spectrum resources becomes vital to accommodate this growth. Adding to this requirement, is the complexity of users having varied QoS requirements.

To address this complexity, Radio Access Network (RAN) slicing technology has been widely adopted by several industrial communities~\cite{samsung, samsungEuCNC}. With the help of RAN slicing, operators can perform service customization, isolation and multi-tenancy support on a common physical network infrastructure by enabling logical as well as physical separation of network resources~\cite{8320765}. This multi-tenancy support enables network operators (NO) to support multiple mobile virtual network operators (MVNOs) in the form of a slice. The Third-Generation Partnership Project (3GPP) has identified network slicing as one of the key technologies to achieve varied performance requirements — such as high throughput, low high security goals in 5G networks~\cite{8039298}.

One of the key features of RAN slicing is that - MVNOs are assigned slices that are independent from one another~\cite{DBLP:journals/corr/abs-1905-08130}. That is, the allocation of the radio resources is up to the NO who can allocate them at will, based on the QoS requirement while ensuring complete isolation between slices. The NO we consider is based on Software Defined - Radio Access Network (SD-RAN) controller architecture comprising of a Slice Manager and MVNO specific scheduler. The NO architecture is formally introduced in Section~\ref{system_model}. 

Once the QoS requirements for each MVNO is collected by NO, the core problem lies in allocation of scarce spectrum resources such that each MVNO's QoS requirement is met for all its users. We consider spectrum resources as resource blocks (RB). This is a difficult problem because over provisioning of RB for a user, will result in wastage and under provisioning might not meet the QoS requirements. Therefore the design of efficient slicing algorithm to meet each MVNO user's requirement is key for optimal usage of RB. Also, from a business standpoint, optimal usage of RB which will result in increased number of users served in a time slot and thereby increased number of MVNOs supported by fixed number of RB is of great interest.

One of the factors which influence the slicing decision, is the channel condition experienced by the RB during its path towards the users. In order to convey the channel information, each user in the network sends a CQI report back to the NO. Often in real world scenarios, the channel conditions do not remain the same. They keep varying with respect to time and frequency. In order to take into account of this dynamic channel condition, the users send the CQI report in regular intervals with its periodicity determined by the NO and in between this interval, the channel conditions are assumed to remain same~\cite{CSI_feedback}. In order to remain close to reality, we consider dynamic channel conditions in our work.

We illustrate the problem of RAN Slicing under dynamic channel conditions by considering minimum data rate per time slot for each user as a specification by MVNOs. Calculation of data rate for a user at a given time depends on MCS level chosen for the user by the NO at that time. Choice of MCS level in turn depends on the CQI report sent from the users of MVNO. Now, the problem we are addressing in this paper is,\textit{ how do we create a channel conditions aware slice for each MVNO such that, maximum number of MVNO user's minimum data rate requirement is met.}

Even though RAN resource allocation issue has been studied extensively in the recent past~\cite{8761163},\cite{DBLP:journals/corr/abs-1905-08130},\cite{d2019slice}, the problem of resource allocation to MVNOs under dynamic channel conditions is relatively new. This is discussed more in Section~\ref{related_works}. Design of efficient resource allocation/slicing enforcement algorithm is not trivial and is met with unique challenges:
\begin{itemize}
    \item \textbf{Users maximization:} Meeting the minimum data rate requirement for maximum number of MVNO users in the slice time slot. This can be achieved by choosing the optimal number of RB and the MCS level for each user. 
    \item\textbf{Orthogonality: } Each RB should be allocated to only one user across all MVNOs at a given time slot to avoid interference ~\cite{8334921},\cite{8407021},\cite{7891795}.
    \item\textbf{Support advanced 5G technologies:} The RB allocation should also facilitate in implementation of advanced 5G technologies such as CoMP and MIMO~\cite{d2019slice}.
\end{itemize}

The aim of this work is to design, analyze and validate MCS aware RAN Slicing(MaRS) algorithm that take into consideration the challenges mentioned above. To summarize, this work makes the following contributions: 
\begin{itemize}
     \item We formulate the MCS aware RAN Slicing (MaRS) problem as a Non-Linear Programming Optimization problem in Section~\ref{problem_formulation} using the model developed in Section~\ref{system_model}. We will also prove the NP-Hardness of the MaRS problem. 
    \item We propose a solution for this problem using the greedy algorithm paradigm in Section~\ref{algorithm}. 
    \item We develop an upper performance bound for the MaRS algorithm in Section~\ref{performance_bounds}.
    \item We provide an implementation of the proposed solution and carry out an exhaustive evaluation in Section~\ref{performance_evaluation}.
\end{itemize}

\section{Related Works}\label{related_works}

There has been significant work to address the problem of RAN slicing, especially in the recent past. There have been many excellent surveys on this topic ~\cite{8320765,8334921,survey,9295415,7891795}. The authors in these surveys provide a comprehensive information regarding the work being done on this topic. Additionally, a book has been published on the topic of RAN Slicing where many slicing algorithms have been proposed~\cite{9289998}. Specifically~\cite{survey} covers the advancements in RAN slicing which is based on the SDN architecture. The architecture considered in our work loosely follows the work covered in~\cite{survey}.

In the recent works, the RAN slicing problem~\cite{inproceedings,7926919,8253541} has been dealt by designing solutions using various theoretical means {optimization~\cite{7926923,article_slicing}, game theory~\cite{article_Radio}}. There has also been many advancements where several machine learning approaches have been used to address the RAN Slicing problem -- \{Reinforcement Learning~\cite{9277604}, Deep Learning~\cite{article_ML_DL,inproceedings_DL,8962338}\}. These machine learning approaches are not suitable for deployment due to their huge data requirements for training and the time it requires to do so. Moreover accurate predictions of the channel conditions are required to make the slicing algorithm effective.

One of the key limitation of these works is that it does not show the actual deployment of RAN slices on top of a physical network. Although the authors in the paper~\cite{d2019slice} discuss RAN slicing policies and enforcement problem by considering fine grained control of resources, it falls short when we bring in dynamic channel conditions. Moreover, the problem formulation considers slice as allocation of certain percentage of resource blocks from a given pool without considering the underlying requirement for these slices.

One of the work which closely focuses on addressing RB allocation problem is~\cite{8407021}. The authors propose a RB partitioning algorithm which focuses on allocating RB to every MVNO by simultaneously maximizing the percentage of satisﬁed MVNOs while allocating the minimum amount of RB. However, the problem in~\cite{8407021} does not take into account the dynamic channel conditions.

Our work can be closely compared to \cite{8761163}. The authors of \cite{8761163} address the problem of RAN slicing by considering dynamic channel conditions in a SD-RAN based architecture. One of the key architectural differences between our work and \cite{8761163} is the flexibility offered to the MVNO in the SD-RAN architecture. In \cite{8761163}, the authors consider individual slice managers for each slice but a common scheduler for all the users. This provides very less flexibility for MVNOs. In our work, we consider independent scheduler for all MVNOs. This allows MVNO, the option of choosing its users for scheduling at each time interval. Section III discusses this in detail.

In~\cite{9020161}, the authors address the RAN Slicing problem for multiplexing eMBB and URLLC slices. Although the paper~\cite{9020161}, considers the MCS selection in the design of the slicing algorithm, it again falls short in providing MVNO the flexibility in scheduling as the architecture considered in completely different.

The papers~\cite{article,5678740} present a framework for LTE virtualization. The authors propose an architecture for virtualizing the LTE base stations (called eNodeB in LTE architecture) with the objective of having different operators sharing the same physical resources. The solution is based on a hypervisor (as in CPU virtualization), which hosts different virtual nodes, allocates the resources and is responsible of the spectrum sharing and data multiplexing. In~\cite{6240347}, the framework from ~\cite{article,5678740} is used to present a algorithm for scheduling physical RB for the virtual nodes. The main idea of this algorithm is that if a eNodeB is overloaded and a neighbor eNodeB has available resources, a user is selected to be migrated to the unloaded eNodeB. Although the concept of centralized control is similar to our work, the problem statement is completely different. In our work, we are addressing the problem of RAN slicing in a multi-MVNO environment as opposed to resource sharing.

In summary, our work addresses the shortcomings of these papers by providing more flexibility to the MVNOs, developing efficient slicing algorithm with dynamic channel conditions and carrying out a thorough validation. 
\section{System Model}\label{system_model}

We consider a NO administering a single $5G$ RAN base station $B$ and set of $\mathcal{M} = \{1,2...,$M$\}$ MVNOs as depicted in Fig. \ref{fig:architectuere}.The NO serves the MVNOs by creating virtual RAN slices built on top of underlying physical network $B$. We split NO functionally into Slice Manager and MVNO scheduler. This architecture lies inline with 5G RAN concepts, where the management and orchestration is implemented as a Software Defined Network (SDN). We adopt the architecture principle similar to ~\cite{8761163},~\cite{10.1145/2999572.2999599}, and include additional features to aid the proposed slicing procedure.

Once the NO collects the minimum data rate slice request from all MVNOs, it creates an instance of MVNO scheduler for each MVNO in the network.We define $\Lambda_m^i$ as the minimum data rate requirement for each user $i$ of MVNO $\forall i \in m, m \in \mathcal{M}, \forall m$. MVNO Scheduler for all $m\in\mathcal{M}$ provides a scheduling order of users belonging to $m$, $\mathcal{U}_m$, to the Slice Manager. The Slice Manager, which has the CQI information for each user in the network, dynamically assigns the resources on $B$ to each MVNO slice based on this scheduling order sent by the MVNO. Advantage of this architecture is that it leaves the choice of scheduler implementation, up to the MVNO. Each MVNO may employ unique scheduling algorithm.

Since the BS follows $5G$ cellular technology, spectrum resources are organized as grids of \textit{RB}, that span across both time and frequency domains~\cite{RB_5g}. Each RB represents the minimum spatio-temporal scheduling unit. Considering $N_{RB}$ and $T$ as the number of available subcarriers and temporal slot respectively, the set of available RB is $|R_b| = N_{RB} \times T$ in the physical RAN network for a certain bandwidth.
%\Chengzhang{Where does ``12" come from? The number of available RBs is $N_{RB}\times T$. Please note that RB is the smallest scheduling unit, not subcarrier.}

\textbf{Implication of time slot T}. Theoretically, the time slot $T$ can range from 1 TTI($t$) to $1000$'s of TTIs, depending upon how dynamically the slice manager wishes to operate resource slicing policy. Under realistic consideration and in lieu with next-generation O-RAN architecture~\cite{ORAN}, it is expected that the slicing manager will either reside in non-real-time Ran Intelligent Controller(RIC) or near-real-time RIC, which are respectively, in order of $>1$s and ($10 - 1000$ ms) time scales~\cite{oran_tti}. Thus, in our work, we consider that $T$ will be a large value, in range of several milliseconds. %\Chengzhang{I think $T=100$ is not considered ``several."}
Further, we consider the user's minimum data rate requirement is defined per time slot $T$. 

\textbf{Dynamic Channel Conditions.} We consider the channel conditions to be dynamic in nature and may vary in frequency and time, but remain consistent within the time slot $T$. This is similar to aperiodic csi reporting~\cite{CSI_feedback}. Depending upon the channel condition obtained from CQI reports for users of the MVNOs being served, the Slice Manager determines a suitable MCS for transmission depending on each MVNO user's minimum data rate requirement, out of $29$ MCS levels as per 5G 3GPP specification~\cite{RB_5g}. Let $\mathcal{C}$ denote the set of available MCS, i.e., $\mathcal{C} = \{0, 1, \dots, 28 \}$. The MCS determines how much information (in bits) is modulated and coded in each RB by the BS. The higher the MCS is, the higher the modulation and coding rate is. That means, the maximum amount of information that can be transmitted on one RB also depends on the channel conditions. If the channel condition is poor and the NO uses a high MCS, then the information carried in the RB will not be successfully received and decoded. Therefore, the achievable data rate by an RB depends on both MCS level chosen by the NO as well as the channel condition for this RB. 

Let $q_{u^i_m}^{r,t}$ denotes the maximum MCS that can be used for a certain RB $r$ to serve a user $u^i_m \in \mathcal{U}_m$ such that the information carried in RB can be successfully received by the user at TTI $t \in T$.

\begin{equation*}
    1\leq q_{u^i_m}^{r,t} \leq |C|
\end{equation*}

Let $v^{c,t}$ denote the modulation and coding rate for a resource block under MCS $c \in \mathcal{C}$ and $d_{u^i_m}^{r,c,t}$ denote the maximum achievable data rate by RB $r$ for the user $u^i_m$ under MCS $c \in \mathcal{C}$ at time $t \in T$. If $c \leq q_{u^i_m}^{r,t}$, the transmission would be successful and the achievable data rate is $v^{c,t}$. Otherwise, the transmission would be unsuccessful and the data would be lost. That is,

\begin{equation}\label{data_rate}
    d_{u^i_m}^{r,c,t} = \begin{cases}
                        v^{c,t}, \hspace{2mm} $if$ \hspace{2mm} c\leq q_{u^i_m}^{r,t}\\
                        0 \hspace{2mm} $otherwise$
                    \end{cases}
\end{equation}

\begin{figure}%
    \centering
 {{\includegraphics[width=9.5cm]{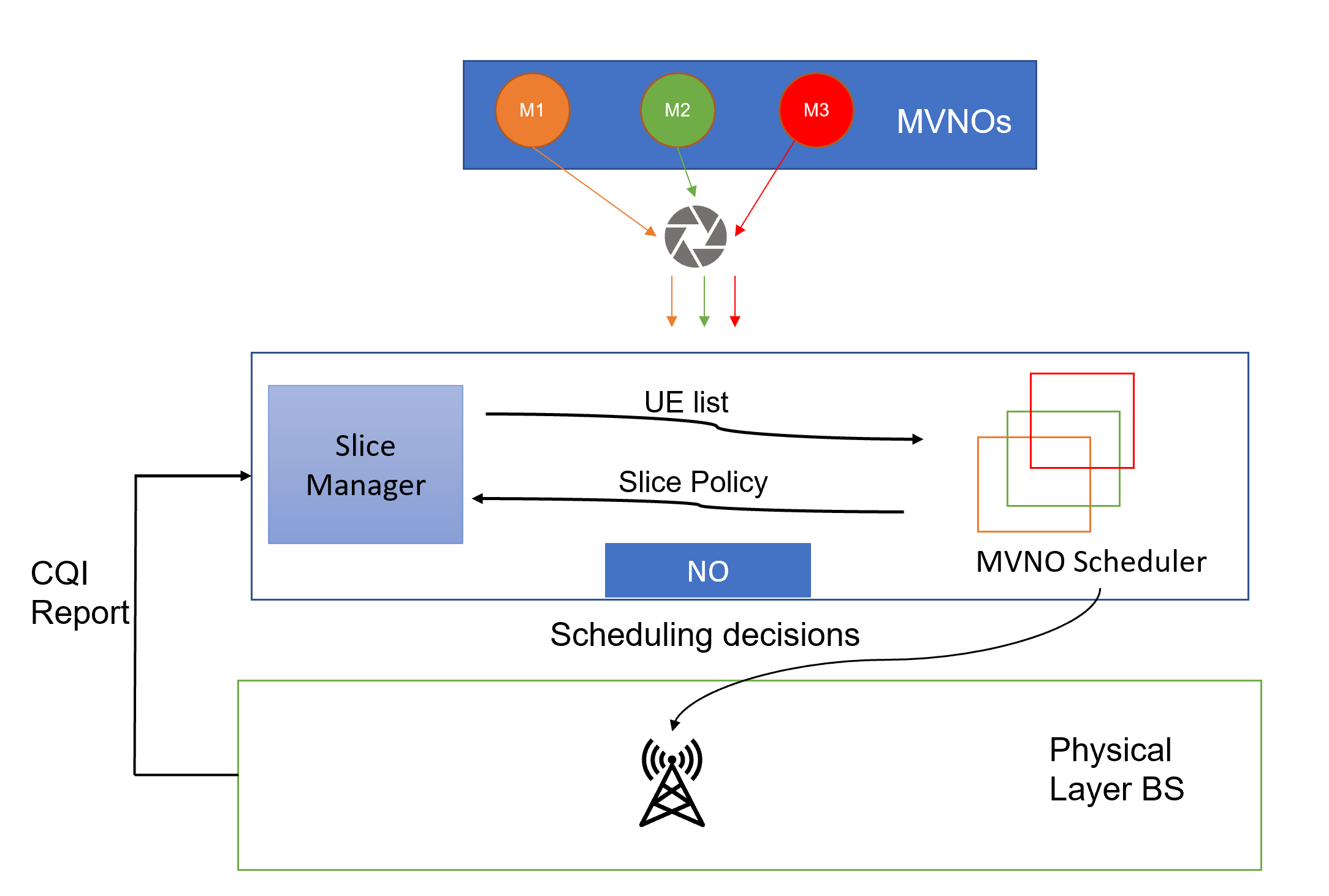} }}%
    \caption{The SD-RAN slicing architecture.}%
    \label{fig:architectuere}%
    \vspace{-0.2in}
\end{figure}

Finally, the NO also imposes a restriction on maximum throughput allowed per MVNO slice, namely $\overline{\Lambda_m}$, depending on channel conditions or business requirements. This restriction prevents any individual MVNO slice from overloading the network. An intuitive way of selecting $\Lambda_m^i$ may be from pure business perspective, i.e, whichever MVNO pays the most will get higher throughput. However, in general the choice of maximum throughput in a multi-MVNO, limited resources environment introduces a new problem that is out of the scope for this work.
\section{SD-RAN Workflow}\label{operational_flow}

Before we proceed with designing of the Slicing Algorithm, it is important to understand how different components in the SD-RAN architecture interact with each other to serve users of MVNO.
In this section, we present the workflow for our SD-RAN architecture in Fig~\ref{fig:architectuere}. 

The NO communicates with several components before it assigns resources to a specific user of a MVNO. After a MVNO requests services from a NO, the task of allocating resources can be broadly divided into 4 steps. 
\paragraph{\textbf{Step 1: }\textit{Acquiring RAN information}} After a MVNO submits a request for a service to the Slice Manager of the NO, the Slice Manager acquires the RAN information. This RAN information contains the number of users in the network for the MVNO, the channel conditions experienced by each user and the available resources in terms of RB in the network to serve the MVNO. In Fig.~\ref{fig:slice_flow}, the base station and the users in the network is represented by a single RAN block. 
    %\Vijay{This can realized in O-RAN architecture using O1 interface.}
    
\paragraph{\textbf{Step 2: }\textit{Acquiring scheduling information}} An instance of the scheduler is created in the NO for each MVNO that requests a service. It is up to the MVNO on how the scheduling algorithm is implemented. For example, a particular MVNO may use round robin and other MVNO might opt for priority scheduling. Its one of the novelty in our work where we provide the MVNO, the flexibility of choosing the scheduling algorithm. In order to make a slicing decision, the Slice Manager interacts with the instance of the MVNO scheduler to acquire the scheduling list which is a list of users and its unique minimum data rate requirement that is generated through MVNO specific scheduling algorithm. 

\paragraph {\textbf{Step 3: }\textit{Making Slicing Decision}} After Step 2, the Slice Manager has all the required information to make a slicing decision. It has the list of users that it needs to serve with its minimum data rate requirement, their channel conditions and the available resources in the network to serve them. Now, the Slice Manager limits the number of users that can be served for a MVNO by imposing a upper bound of maximum throughput allowed per slice. Using the MCS-aware Ran Slicing Algorithm, the slice manager makes a Slicing Decision by assigning resources to the users of the multiple MVNOs across time $T$. 

\paragraph{\textbf{Step 4: }\textit{Enforcing Slicing decision}} After the Slice Manager makes the Slicing Decision for time slot $T$, it is conveyed to the MVNO scheduler and enforced on RAN. The MVNO scheduler can use this Slicing Decision as an input to generate the scheduling list for the next time slot $T$.
%\end{itemize}

\begin{figure}%
    \centering
 {{\includegraphics[width=9cm]{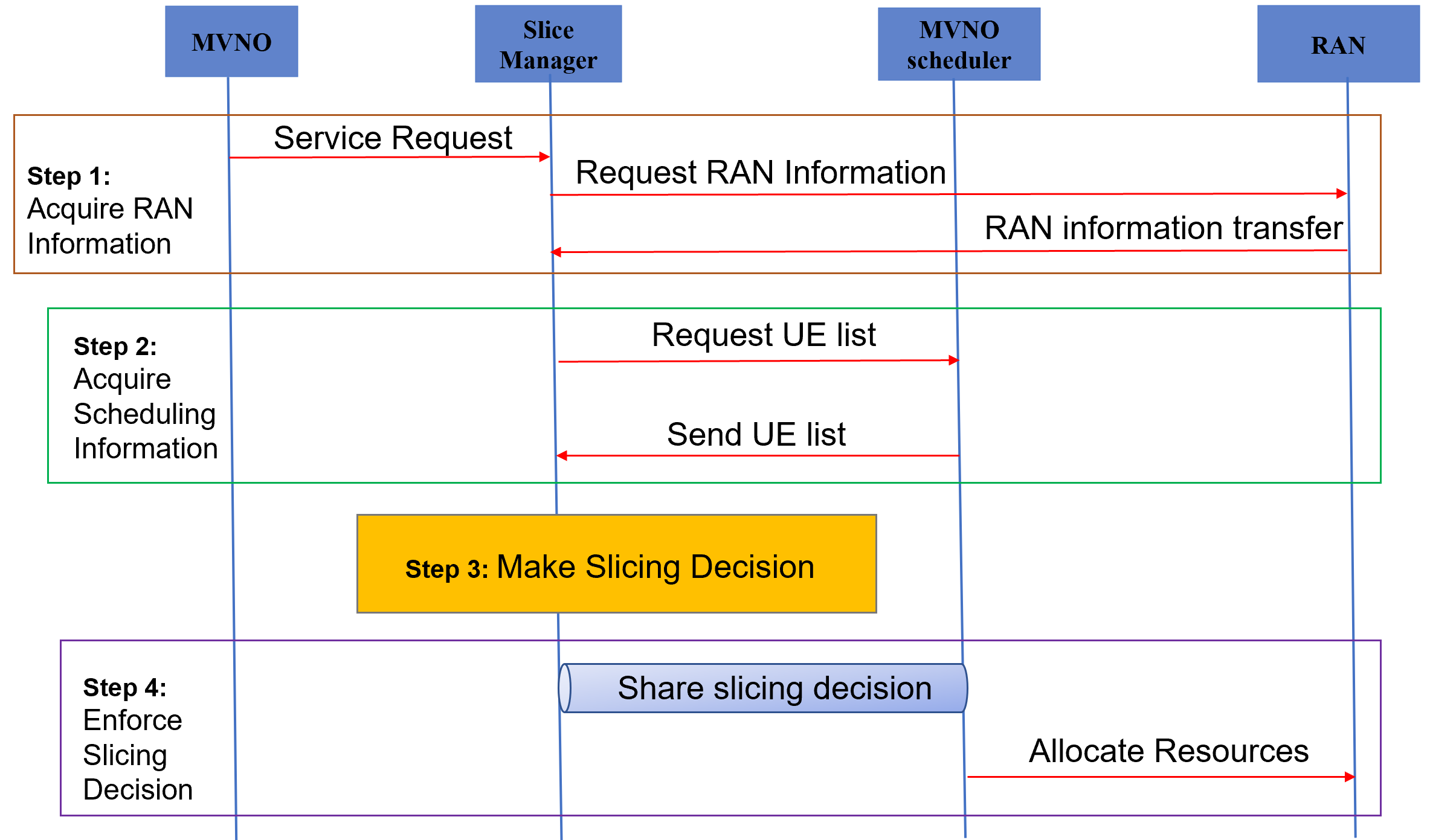}}}%
    \caption{SD-RAN slicing architecture workflow diagram.}%
    \label{fig:slice_flow}%
    \vspace{-0.2in}
\end{figure}

\section{Problem Formulation}\label{problem_formulation}
In this section, we formulate the MCS-aware RAN Slicing (MaRS) problem as a Non Linear Optimization problem. The problem aims at determining the optimal set of resource blocks to be allocated to each MVNO $m \in \mathcal{M}$ in time slot $T$, such that maximum number of users can be served across MVNOs in $T$, by considering -- (1)  MVNO's minimum bit rate requirement is met for each of its user, (2) Each MVNO scheduler's unique user scheduling order is ensured, and (3) The total throughput per slice does not exceed the maximum allowable throughput set by the NO for that MVNO slice.

% Note that we solve the RSAP problem for each time slot $T$, and hence, we remove $t$ from all the notations used in the formulation below.

\textbf{Notation.} Let set $\mathcal{U}_{m} = \{u^1_m, \dots u^j_m, \dots u^{|U_m|}_k\}$ denote the scheduling order of all users belonging to MVNO $m \in \mathcal{M}$. 
 
\textbf{Decision variables.} Let $u^i_m$ denotes whether a user $i$ belonging to MVNO $m$ can be served by the Slice Manager. Let $x^{r, i,t}_{m}$ denote the whether a certain RB $r \in R$ is allocated to any user $u^i_m$ in MVNO $m$ at TTI $t \in T$. Let $y^{c, t}_{i, m}$ denote whether a MCS level $c$ is chosen by a user $u^i_m$ at TTI $t$. 

\begin{align}
    & P1: \underset{x^{r, i,t}_m , y_{i,m}^{c,t}} {\max} \sum_{m \in M} \sum_{u^i_m \in \mathcal{U}_m} u^{i}_{m}  \label{optimization problem}\\
    & \sum_{u^i_m \in U_m} x^{r, i,t}_m \leq 1, \forall r, t \label{unique_RB_constraint}\\
    & c \times y^{c,t}_{i, m} \leq c^{r,i,t}_{max} x^{r,i,t}_{m}, \forall t, i, m,r,c \label{mcs_selection_constraint}\\
    & \sum_{c} y^{c, t}_{i,m} \leq 1, \forall m,i, t \label{unique_MCS_constraint}\\
    & u^{i}_m \geq u^{j}_m, \forall i < j, \forall m \label{scheduling_order}\\
     & \sum_{t \in T} \sum_{r \in R} x^{r,i, t}_{m} y^{c,t}_{i,m}d_{u^i_m}^{r,c,t} \geq  u^{i}_{m}\Lambda_m^i, \forall m, i \label{data_rate_constraint_for_each_user}\\
     & \sum_{t \in T} \sum_{r \in R}  \sum_{u^i_m \in U_m} x^{r, i,t}_m y^{c,t}_{i,m} d_{u^i_m}^{r,c,t} \leq \overline{\Lambda_m} ,\forall m \label{Maximum_datarate_allowed_for_each_MVNO}\\
    & x^{r,i,t}_m, y^{c,t}_{i, m}, u^i_m  \in \{0, 1\} \label{binary_variables}
\end{align}

The maximization problem given in \eqref{optimization problem} targets to accommodate maximum number of users to satisfy the constraints. 
Constraint (\ref{unique_RB_constraint}) indicates that a resource block can be allocated to one UE at any given time. Equation (\ref{mcs_selection_constraint}) indicates that MCS chosen for a user cannot be greater than maximum MCS supported by any resource block r at that time. Moreover, equation (\ref{unique_MCS_constraint}), ensures that a single MCS level is chosen for a user at time $t$. Constraint (\ref{scheduling_order}) ensures that scheduling order determined by MVNO scheduler is maintained in allocating resources. Equation (\ref{data_rate_constraint_for_each_user})
meets the minimum data rate requirement for each user belonging to a MVNO.
Equation (\ref{Maximum_datarate_allowed_for_each_MVNO}) 
%\Chengzhang{(8)}
indicated the maximum data rate achieved by the resources allocated to MVNO is under the maximum allowable throughput for MVNO. 

% Equation \ref{Maximum_RB_Constraint} indicated that at any given time, the number of RBs available for allocation is limited. 
 
% Further, Equation 4 and 5 can be clubbed together into, 
 
% \begin{align}
%     & \sum_{u^i_{m} \in U_{m}} u^{i}_{m}\lambda^i_m \leq \lambda_{m},  \forall m
% \end{align}

\begin{theorem} \label{NP-hardness-proof}
The MaRS problem is NP-Hard.
\end{theorem}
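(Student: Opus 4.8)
The plan is to prove NP-hardness by a polynomial-time reduction from the \textsc{Partition} problem, which is well known to be NP-complete. Recall \textsc{Partition}: given positive integers $a_1, \dots, a_n$ with $\sum_i a_i = 2B$, decide whether there is a subset $S \subseteq \{1,\dots,n\}$ with $\sum_{i \in S} a_i = B$. I would work with the natural decision version of MaRS, namely ``can the objective in \eqref{optimization problem} reach a given target $K$?'', and show that any polynomial-time algorithm for MaRS would decide \textsc{Partition}.

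Given a \textsc{Partition} instance I would construct a MaRS instance with a single MVNO ($M=1$) containing exactly two users $u^1_1, u^2_1$, each with minimum data rate requirement $\Lambda_1^1 = \Lambda_1^2 = B$ and slice cap $\overline{\Lambda_1} = 2B$. I would take $n$ TTIs $t = 1, \dots, n$ (so $T=n$), place a single resource block $r_t$ in each TTI, and set the channel data so that this RB always admits the lowest MCS $c=1$ (i.e. $q^{r_t,t}_{u^i_1} \ge 1$) with the slot-dependent rate $v^{1,t} = a_t$. Thus in TTI $t$ the unique available RB can be handed to at most one of the two users by \eqref{unique_RB_constraint}, and whenever it is used it delivers exactly $a_t$ units of rate. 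I would set the target $K = 2$ and order the users so that \eqref{scheduling_order} is satisfied whenever both are served.

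I would then argue the equivalence. Any feasible solution serving both users induces a bipartition of the index set $\{1,\dots,n\}$ between $u^1_1$ and $u^2_1$; by \eqref{data_rate_constraint_for_each_user} each user must collect at least $B$, while the total rate available across all TTIs is exactly $\sum_t v^{1,t} = 2B$. Hence both users reach $B$ if and only if each collects exactly $B$, i.e. if and only if the chosen index set is a valid \textsc{Partition}. Conversely, any \textsc{Partition} yields such an assignment, which automatically respects \eqref{unique_MCS_constraint} (one MCS per user per slot) and \eqref{Maximum_datarate_allowed_for_each_MVNO} (total $2B \le \overline{\Lambda_1}$). Therefore the MaRS optimum equals $2$ precisely when the \textsc{Partition} instance is a YES-instance, and since the construction has size polynomial in the input, MaRS is NP-hard.

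The step I expect to require the most care is the encoding of the item weights, because of the MCS/channel structure of the model: within a single TTI a user selects one MCS via \eqref{unique_MCS_constraint}, after which every RB it uses delivers the \emph{same} rate $v^{c,t}$, so distinct weights cannot be realized by distinct RBs inside one slot. Spreading the $n$ items over $n$ separate TTIs with one RB each, and exploiting the freedom of $v^{c,t}$ to vary with $t$, is what lets the weights $a_t$ be embedded faithfully while keeping \eqref{unique_RB_constraint}--\eqref{binary_variables} simultaneously consistent; checking that this special instance is legal in the model (admissible MCS indices, integrality of the decision variables, and the harmless redundancy of the cap constraint given the finite RB pool) is the part I would write out most carefully. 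A reduction from \textsc{Subset-Sum} would go through with the same gadget.
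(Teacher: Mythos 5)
Your reduction is correct, but it takes a genuinely different route from the paper. The paper does not reduce from a number-partitioning problem: it restricts MaRS to a single MVNO, a single TTI, and channel conditions identical across all RBs, rewrites the result as program P2 in \eqref{optimization problem for 1 mvno}--\eqref{Maximum_RB_Constraint}, and asserts that P2 is the maximum coverage problem, hence NP-hard. Your argument differs in two ways, both to its advantage. First, it runs in the logically required direction: you embed a known NP-complete problem into MaRS as a special case, so a polynomial-time algorithm for MaRS would decide \textsc{Partition}. The paper's closing inference (``since MaRS can be modeled as a maximum coverage problem, MaRS is also NP-hard'') is phrased in the opposite direction, which by itself proves nothing about hardness; worse, the restricted program P2 as written drops both the per-user demands $\Lambda_m^i$ and the orthogonality constraint \eqref{unique_RB_constraint}, so the identification with maximum coverage is never actually established---the paper's restriction (uniform channels, unit-value users) in fact looks polynomial-time solvable by serving cheapest users first, so the hardness being appealed to is not visibly present in the restricted problem. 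Second, the gadgets differ in substance: the paper collapses the time dimension and works purely in frequency, whereas you exploit exactly the feature that makes MaRS distinctive---per-TTI rates $v^{c,t}$ under dynamic channels---to encode the weights $a_t$, obtaining hardness already with one MVNO, two users, and one RB per TTI (a design that also sidesteps a quirk of constraint \eqref{mcs_selection_constraint}, which quantifies over all $r$ and effectively forbids a nonzero MCS for a user unless it holds every RB of that TTI). The one thing the paper's intended route would buy, if repaired, is \emph{strong} NP-hardness, since maximum coverage is strongly NP-hard, while \textsc{Partition} is only weakly NP-complete, so your binary-encoded weights leave open a pseudo-polynomial algorithm; if you want the stronger conclusion, rerun your construction with $m$ users of demand $B$ each from 3-\textsc{Partition} instead of two users from \textsc{Partition}.
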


\begin{proof}
In order to prove the NP Hardness, consider the optimization problem defined in Equation \eqref{optimization problem} for a single MVNO and for a single time slot $t\in T$. Therefore, we drop the $m$ and $t$ notation. Further, we consider channel condition is the same across all base stations (and RBs), then the MCS level for all the RBs will be same, $c \in \mathcal{C}$. This affects Equations \eqref{data_rate_constraint_for_each_user} and \eqref{Maximum_datarate_allowed_for_each_MVNO}. 
Therefore, we can rewrite the optimization problem and the constraints as follow,

\begin{align}
     & P2: \underset{x^{r, i}} {\max} \sum_{u^i \in \mathcal{U}} u^{i}  \label{optimization problem for 1 mvno}\\
     & \sum_{r \in R} x^{r,i} \geq  u^{i}\\%\lambda^i \\
     & \sum_{r \in R} x^{r,i} \leq \overline{\Lambda_m} \label{Maximum_RB_Constraint} \\
     & x^{r,i}, u^i  \in \{0, 1\}
\end{align}
Notice that P2 is a maximum coverage problem, which is a classic NP-Hard problem ~\cite{coco}. Since MaRS problem can be modeled as a maximum coverage problem, MaRS problem is also a NP-Hard problem.
\end{proof}

% \textbf{Discussion.} Besides NP-Harness of the problem, the MaRSP problem assumes two global knowledge 

% % \subsection{Problem Reformulation}
% % In this subsection, we reformulate the MaRSP problem, prove its submodularity, and finally present an effective algorithm to address this problem in next section V. 

% % Let the tuple $<c_i, R_i>$, called strategy, denotes the MCS level $c_i$ and subset of allocated RBs, $R_i$ across the base stations (i.e., $R_i = \sum_{b \in B} \sum_{r \in R_b} r$) for each UE $u in \mathcal{U}$. Also, let $u_v (c_i, R_i, \lambda_m)$ denote the UE utility, which can be defined as:

% % \begin{equation}
% %     u_v(c_i, R_i, \lambda_m) = 
% %     \begin{cases}
% %         c_i R_i, \hspace{2mm} R_i v^c \geq \lambda_m \\
% %         0 \hspace{2mm} \text{otherwise}
% %     \end{cases} 
% % \end{equation}

% % Thus, the problem $P1$ can be reformulated as follows: How to select $U$ subset of strategies from the strategy set, denoted by $\Gamma$, to maximize the sum of UE utility.

% % Let $x_i$ be a binary indicator denoting whether the $i_{th}$ strategy in the strategy set $\Gamma$ is selected or not. The problem $P1$ can be reformulated as
     
% % \begin{align}
% % P2: \mathop {&\max } \sum_{<c_i, R_i> \in \Gamma} x_i u_v(c_i, R_i, \lambda_m) \label{RSAP_objective_2} \\
% % s.t., &\sum_{i \in \Gamma} x_i = U
% %     \end{align}
\linespread{1.2}
\begin{table} 
\centering
\caption{Notation Table}\label{notation}
\begin{tabular}{p{0.25in} p{2.75in}}
\hline
Symbol  &  ~~~~~~~~~~~~~~~~Definition  \\ \hline
$\mathcal{M}$ & A set of MVNOs requesting slices from NO. \\
$\Lambda_m^i$ & Minimum data rate req. for user $i$ in MVNO $m\in \mathcal{M}$.\\
$\overline{\Lambda_m}$ & Maximum allowable throughput for a slice $m \in \mathcal{M}$.\\
$\mathcal{U}_m$ & Scheduling List for MVNO $m\in \mathcal{M}$\\
$\mathcal{C}$ & A set of possible MCS values as per 3GPP  specifications.\\
$u_{m}^i$ & Represents a UE $i$ belonging to MVNO $m \in \mathcal{M}$. \\
$v^{c,t}$ & The modulation and coding rate for an RB \\
& under MCS $c \in \mathcal{C}$ at time $t \in T$.\\
$q_{u^i_m}^{r,t}$ & The maximum MCS that can be used for a certain RB $r$ \\ & to user belonging to MVNO $m \in \mathcal{M}$ at time $t \in T$.\\
$d_{u^i_m}^{r,c,t}$ & The maximum achievable data rate by RB $r$ \\& for a UE $u_m^i$ under MCS $c \in \mathcal{C}$ at time $t \in T$.\\
$c^{r,i,t}_{max}$ & Maximum mcs that can be selected for a RB $r$ \\&for user $i$ at time $t\in T$.\\
$\mathcal{L}^T$ & Slicing List - List of users to be scheduled across MVNOs\\& at time $T$.\\
$C_{max}$ & Maximum MCS that can be selected for a user at any \\& TTI $t\in T$.\\
$h^t$ & Maximum achievable data rate for a user at each tti $t$. \\
$\Tilde{c}$ & MCS used to achieve maximum data rate at each TTI $h^t$. \\
$A^t$ & List to hold maximum data rate for each user for \\&every TTI $t \in T$.\\
$R_{tot}$ & Total available RB in the network.\\
$\hat{R}$ & RB that have been already allocated in $T$.\\
%$d_i^m$ & Running sum of the data rate for each user $u_m^i$.\\
%$d_m$ & Running sum of the data rate for each MVNO $m\in M$.\\
%$R_{can}$ & Candidate RB which can contribute to data rate with mcs $\Tilde{c}$.\\ 
$R'$ & RB that contribute to achieve maximum data rate $h^t$ at each \\&TTI $t \in T$.\\
$c'$ &  MCS value for $R'$ that achieve the maximum data rate $h^t$.\\
%$\delta^t_i$ & A tuple containing $R', c', h^t$.\\
%$h_u$ & Running sum of data rate for each user across TTI to \\& meet \Lambda_m^i.\\
%$R_u$ & Running count of RB to that helps achieve $h_u$.\\
%$c_u$ &  Running count of mcs c that helps achieve $h_u$.\\
$R*$ & Total RBs used to meet data rate requirements all the users\\& in time slot $T$.\\
$C*$ & MCS used for all the RB in $R*$.\\
$U$ & Users served in time slot $T$. \\\hline
\end{tabular}
\end{table}
\linespread{1}
\section{MCS Aware RAN Slicing Algorithm}\label{algorithm}
In this section, we develop the MCS aware RAN slicing algorithm based on a greedy paradigm.
\subsection{Key Intuitions behind the Proposed Algorithm}
The design of the MaRS algorithm is based on the following key intuitions. 
    \begin{figure}%
    \centering
 {{\includegraphics[width=5.5cm]{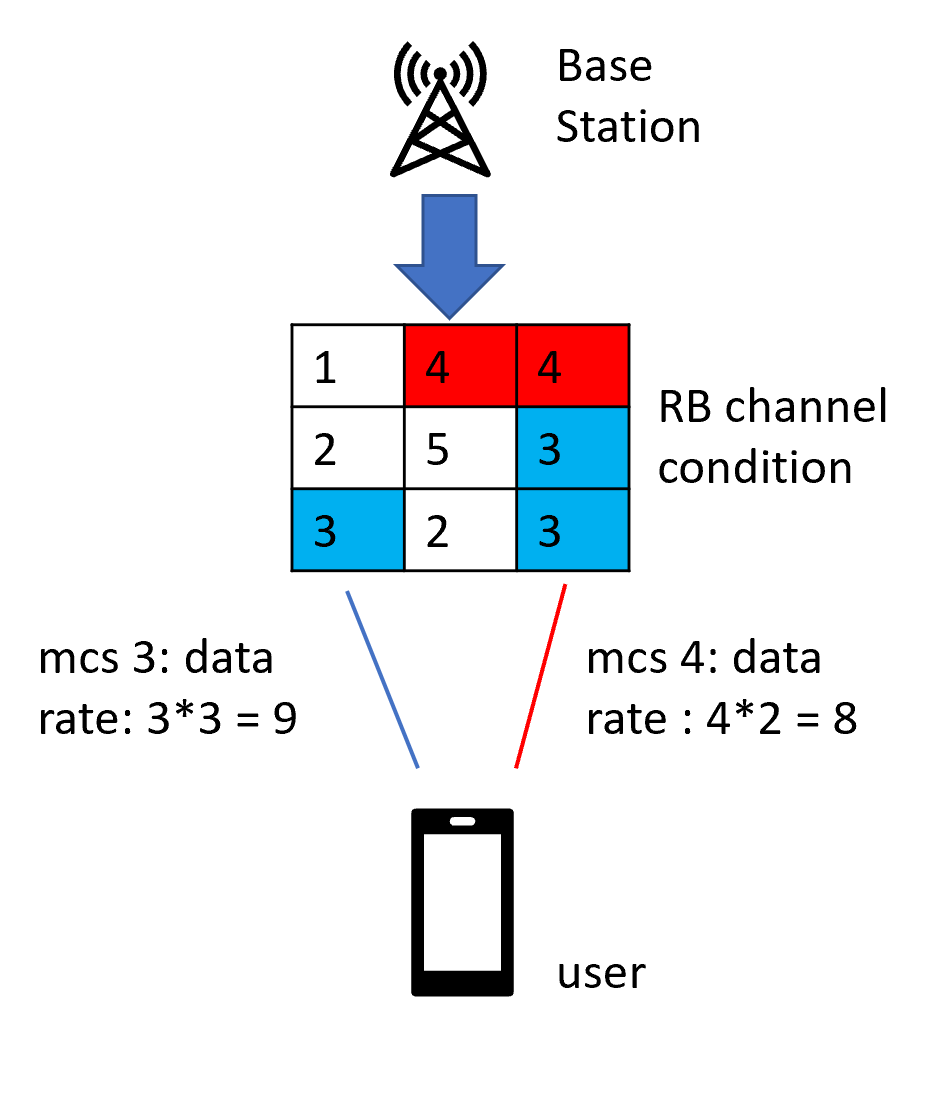} }}%
    \vspace{-0.2in}
    \caption{An example for MCS selection.}%
    \label{fig:mcs_selection}%
    \vspace{-0.2in}
\end{figure}
    
\textit{Intuition 1.} From MaRS problem objective function (See Eq. \ref{optimization problem}), it is obvious that we need to maximize the number of users that can be served in $T$ for each MVNO $m \in \mathcal{M}$. We consider the minimum data rate requirement to be per time slot $T$ and we say that a user $u_m^i$ is served only if it is allocated sufficient resource blocks such that its minimum data rate $\Lambda^i_m$ is met in $T$. Based on this observation, we should minimize the number of resource blocks utilized to serve each user.

\textit{Intuition 2.} We sort the users across MVNOs in increasing order based on their minimum data rate requirement $\Lambda^i_m$. We call it as Slicing List $\mathcal{L}^{T}$. Even though, each user can have its own minimum data rate requirement, we must follow the scheduling order defined by the MVNO (equation~\eqref{scheduling_order}). That is, for some MVNO $m$ if the scheduling order is  ${u^1_m,u^2_m}$, we must always allocate resources to $u^1_m$ first even if $\Lambda_m^1 > \Lambda_m^2$. This ensures that in a case of insufficient resource blocks to support all users in $T$, the user which is first in the scheduling order gets higher priority than other users. However, we do not maintain any scheduling order across MVNOs. That is, for any MVNOs $m,n$, the Slicing List can be $\mathcal{L}^{T} = \{\Lambda_m^1, \Lambda_m^2, \Lambda_n^1\}$ if $\Lambda_m^2 <  \Lambda_n^1$.
    
\textit{Intuition 3.} To incorporate the channel conditions in the slicing decision, we must consider the effect of MCS selection on RB. In Fig. \ref{fig:mcs_selection}, we use an example to show the dependency between MCS selection and number of RB. Suppose a user is requesting a data rate of 8 from a Base Station which has 9 RB. The channel conditions for each of the resource block is denoted in their respective grid position. If the BS chooses MCS 3 for transmission, 3 RB are required to meet the user's data rate requirement as $3\times3=9$. If BS choose MCS 4 for transmission, the user's data rate requirement can be met by just 2 RB as $2\times4=8$. Therefore, choosing the higher MCS reduces the RB utilization to meet the data rate requirement. From the previous ideas, we know that we must use least amount of resources to serve users to maximize the number of users served. This implies, we must choose the maximum MCS for each user at any given time. 
%Details on how we use the channel conditions in slicing decision is explained in the next section.
    
\textit{Intuition 4.} The slicing decision is an iterative approach where in, we allocate the subset of unallocated resource blocks based on its MCS level to a user of MVNO at each iteration. The slicing decision is controlled by two main factors, the minimum data rate requirement for each user $\Lambda^i_m$ and the maximum allowable throughput decided by the NO for each MVNO, $\overline{\Lambda_m}$. Eventually, the algorithm exits when all the users have been served or when all RB are allocated.

\subsection{Algorithm Details}

In this section, we discuss on how we utilize the MCS levels on the RB in making the slicing decisions.

Recall that the first step in our algorithm is the generation of the slicing list $\mathcal{L}^{T}$. This depends on:
\begin{itemize}
    \item the minimum data rate requirement for each user $\Lambda^i_m, \forall u^i_m, m$.
    \item the scheduling list sent by each MVNO $\mathcal{U}_m, \forall m$. 
\end{itemize}

Using these information, the Slice Manager develops $\mathcal{L}^{T}$ which is valid for time slot $T$ by two-stage sorting, as shown in Alg.~\ref{slicing_list}. 

\begin{algorithm}[h!]
\caption{Slicing List}\label{slicing_list}
\small
\begin{algorithmic}[1]
\State Collect scheduling order and minimum data rate requirement $\Lambda_m^i$ for each user. \State Generate a tuple for each user which contains MVNO id, scheduling order, minimum data rate $<m,i,\Lambda_m^i>, \forall i,m$.
\State Add all users to the list $\mathcal{L}^T = [<m,i,\Lambda_m^i>], \forall i,m$.
\State \textbf{Sort}  $\mathcal{L}^T$  based on $\Lambda_m^i$.
\State \textbf{Sort}  $\mathcal{L}^T $ based on $i$.
\State return  $\mathcal{L}^T$ 
\end{algorithmic}
\end{algorithm}

With the Slicing List $\mathcal{L}^T$ as the input, we present the MCS-aware RAN Slicing algorithm in Alg.~\ref{MaRSP-Algorithm}. The algorithm outputs the least number of resource blocks and their MCS level in the time slot $T$ such that each user's minimum data rate requirement is met.

As discussed in the previous section, the algorithm uses an iterative approach wherein at each iteration, it serves a user according to $\mathcal{L}^T$. This algorithm consists of two key steps. 

(i) \textbf{Step 1.} \textit{Finding the optimal number of resource blocks and their MCS level which maximizes the achievable data rate at each TTI $t \in T$} - This is addressed by iterating over the MCS values that a user can support, followed by iterating over each TTI. Remember, the achievable data rate at each TTI, is directly related to the MCS level chosen for its resource blocks. Therefore, in our algorithm, we iterate over each TTI, starting with the maximum MCS $C_{max}$, first to calculate the data rate. We keep track of the maximum achievable data rate by updating $h^t$ after each iteration of MCS $\Tilde{c}$.

(ii) \textbf{Step 2.} \textit{Greedily allocate the resource blocks for each user such that their requirement is met.} 
Once we have the list containing the maximum achievable data rate and the corresponding resource blocks with the MCS value for each TTI $A^t$, we now allocate the resources to the user in $T$. Our key idea is to minimize the number of resource blocks for each user which will subsequently help us in maximizing served users in $T$. As discussed in previous section, in order to minimize the RB utilization, we need to choose the higher MCS. Following this idea, we follow a greedy approach where in we choose the resource blocks with maximum MCS first in $A^t$ to meet the minimum data rate requirement for each user. This enables us to choose the least amount of resource blocks and corresponding MCS at each TTI $t \in T$, such that each user's requirement $\Lambda_m^i$ is met for time $T$. 

\begin{algorithm}[h!]
\caption{MCS-aware RAN Slicing Algorithm}\label{MaRSP-Algorithm}
\small
\textbf{Input}: Slicing list $\mathcal{L}^T$, $\Lambda^i_m$ minimum bit rate requirement for each user belonging to mvno $m$, $v^c$ achievable bit rate with MCS level $c \in C$, maximum allowable throughput for a MVNO $\overline{\Lambda_m},\forall m$, the maximum mcs that can be supported by a resource block at TTI $t$, $q^{r,t}$. 

\textbf{Output}: Set of allocated RBs $R^*$ and MCS level $C^*$, for each user in $\mathcal{L}^T$.
\begin{algorithmic}[1]
\State Initialize $R^{*} = \phi$ and $C^{*} = \phi$
\State Initialize already allocated RBs, $\hat{R} = \phi$
\State Total RBs, $R_{tot}$
\For{each user, $u_m^i$ in $\mathcal{L}^T$}
\State current data rate for each user, $d_{m}^i = \phi$
\State current data rate for each MVNO, $d_{m} = \phi$
\If {$d_{m} \geq  \lambda_{m}$}
\State \textbf{break}
\EndIf
\For{each MCS, $c = C_{max}, \dots, 1$}
\State list to hold each TTI information, $A^t = \phi$
\If{$\hat{R} \cap R_{tot} = \phi$}
\State return \textbf{``No solution''}
\EndIf
\For{$t=0,\dots,T$}
\State maximum data rate at tti t $h^t = \phi$
\For{$\Tilde{c} = C_{max}, \dots,c$}
\State candidate RB for MCS $\Tilde{c}$, $R_{can} = \phi$
\For{$r \in R_{tot}$}
\If{$r \cap \hat{R} = \phi$ and $q^{r,t} \geq \Tilde{c}$}
\State $R_{can} = R_{can} \cup r$
\If{$R_{can} \times v^c > h^t$}
\State $h^t = R_{can} \times v^c $
\State $R' = R_{can}, c' = \Tilde{c}$
\EndIf
\EndIf
\EndFor 
\EndFor
\State add tuple $\delta^t_i=<c',R',h^t>$ to the list $A^t$
\EndFor
\State \textbf{sort} $A^t$ based on decreasing order of $c'$
\For{each tuple $\delta_i^t$ in $A^t$}
\State $h_u = h_u + \delta_i^t[h^t]$
\State $R_u = R_u \cup \delta_i^t[R']$
\State $c_u = c_u \cup \delta_i^t[c']$
\If{$h_u \geq \Lambda_m^i$}
\State $R^{*} = R^{*} \cup R_u$, $\hat{R} = \hat{R} \cup R^{*}$
\State $C^{*} = C^{*} \cup c_u$
\State $U = U + 1$
\State \textbf{break}
\EndIf
\EndFor
\EndFor
\EndFor
\State return $U$, $R^{*}$ and $C^{*}$
\end{algorithmic}
\end{algorithm}

\subsection{Time Complexity} We will now discuss the complexity of MaRS algorithm. To compute the maximum data rate for a user at each TTI, the time complexity is $O(|C||R_{tot}|)$. We need to compute this for the each TTI $t\in T$. Therefore, the total time complexity to compute maximum data rate for a user in time $T$ is, $O(|T||C||R_{tot}|)$. After that, we sort maximum data rate achieved across TTIs. The sorting operation in represented in Line 24 which has the complexity of $O(|T||log T|)$. Now, we iterate over each element in the sorted list to meet the data rate requirement, $O(|T|)$. Then MaRS algorithm allocates the optimal resources $R_u$ and chooses its MCS $c_u$ for each user across TTI if possible for the current MCS $c$. Therefore the total complexity for each iteration of $c$ is  $O(|T||C||R_{tot}|) + O(|T||log T|) + O(|T|) = O(|T||C||R_{tot}|)$. Since there are $|C|$ possible values of $c$, the complexity is $O(|T||C|^2|R_{tot}|)$. Now MaRS algorithm calculates this for every user in $|L^T|$. Therefore, the total time complexity of MaRS algorithm is $O(|L^T||T||C|^2|R_{tot}|)$.

\begin{theorem}
If there exists a feasible solution for any given user in $\mathcal{L}^T$, Algorithm~\ref{MaRSP-Algorithm} will find it.
\end{theorem}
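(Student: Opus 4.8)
The plan is to prove completeness by showing that the scalar quantity the algorithm accumulates for a user coincides exactly with the maximum data rate that \emph{any} valid allocation could deliver to that user, so that ``a feasible solution exists'' and ``the algorithm serves the user'' become the same condition. First I would fix a user $u_m^i$ together with the set $\hat{R}$ of resource blocks already committed to earlier users in $\mathcal{L}^T$, and pin down what feasibility means here: there is an assignment that, at each TTI $t$, picks a single MCS level (constraint~\eqref{unique_MCS_constraint}) and a set of still-available RBs $r$ with $q^{r,t}\ge$ that level (constraints~\eqref{unique_RB_constraint} and~\eqref{mcs_selection_constraint}) whose aggregate rate reaches $\Lambda_m^i$ (constraint~\eqref{data_rate_constraint_for_each_user}).

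The first key step is a decomposition lemma. Since the achieved rate in~\eqref{data_rate_constraint_for_each_user} is a sum over $t$ and $r$, and since the RBs available at distinct TTIs are disjoint resources once a single user is isolated, the total rate delivered to this user splits additively across TTIs. Within one TTI the single-MCS restriction forces the optimal choice to be: fix one level $\tilde c$ and use \emph{every} available RB that supports it, yielding $N_t(\tilde c)\cdot v^{\tilde c}$ with $N_t(\tilde c)=|\{r\notin\hat R : q^{r,t}\ge \tilde c\}|$; maximizing over $\tilde c$ gives the per-TTI optimum $h^t_{\max}$. I would then argue that feasibility holds \emph{iff} $\sum_{t\in T} h^t_{\max}\ge \Lambda_m^i$, because no allocation can exceed the sum of per-TTI maxima and, for a single user, adding more qualifying RBs at a TTI never hurts.

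The second step matches this characterization to the algorithm. I would verify that the nested loops of Alg.~\ref{MaRSP-Algorithm} compute, for outer index $c$, exactly $h^t(c)=\max_{c\le\tilde c\le C_{max}} N_t(\tilde c)\,v^{\tilde c}$, which is monotone non-increasing in $c$; hence the most permissive pass is $c=1$, where $h^t(1)=h^t_{\max}$. The greedy accumulation over the sorted list $A^t$ just sums the $h^t$ contributions until the running total reaches $\Lambda_m^i$, so for a fixed $c$ the user is marked served precisely when $\sum_t h^t(c)\ge \Lambda_m^i$. Combining this with the decomposition lemma, feasibility yields $\sum_t h^t(1)=\sum_t h^t_{\max}\ge \Lambda_m^i$, so the user is served no later than the $c=1$ pass (and typically earlier, at a higher MCS, which is the intent of Intuition~3). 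I would close by checking that the two early exits of the algorithm, namely RB exhaustion ($\hat R$ covering $R_{tot}$) and the MVNO cap $\overline{\Lambda_m}$, cannot trigger spuriously when a feasible per-user allocation exists within the remaining budget.

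The step I expect to be the main obstacle is the decomposition and feasibility characterization: I must argue cleanly that maximizing each TTI independently is without loss for a single user (there is no cross-TTI coupling once one user is isolated and $\hat R$ is fixed) and that the single-MCS-per-TTI constraint makes ``use all RBs supporting the chosen level'' optimal at each TTI. Reading off that the loops realize $h^t(c)$ and that greedy summation crosses the threshold is routine bookkeeping, but the equivalence between combinatorial feasibility of~\eqref{unique_RB_constraint}--\eqref{data_rate_constraint_for_each_user} and the scalar test $\sum_t h^t_{\max}\ge\Lambda_m^i$ is where the real content lies.
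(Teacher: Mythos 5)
Your proposal is correct and takes essentially the same route as the paper's own proof: both reduce serving a user to the scalar test that the sum of the per-TTI maximum achievable rates stored in $A^t$ reaches $\Lambda_m^i$, which is exactly the paper's condition~\eqref{theorem_proof}. The only difference is rigor --- you actually prove the equivalence between feasibility and that test (your decomposition lemma, using that a single user's rate splits additively across TTIs and that one MCS level with all qualifying RBs is per-TTI optimal) and you audit the early exits, steps the paper leaves implicit by merely asserting that the algorithm ``goes over every possible combination'' of resource blocks and MCS.
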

\begin{proof}
As discussed in the previous section, for each user, Algorithm~\ref{MaRSP-Algorithm}, calculates the maximum achievable data rate for each TTI $t \in T$. The algorithm goes over every possible combination of the resource blocks and MCS to determine this data rate. Once the algorithm generates $A^t$, the maximum achievable data rate for a user at each TTI, it proceeds with RB allocation. Now, as long as the sum of the data rates in $A^t$ is greater than minimum user data rate requirement, the algorithm provides a solution. That is, given a finite set of unallocated RB and a user $u^i_m$ with minimum data rate requirement of $\Lambda_m^i$,
\begin{align}\label{theorem_proof}
\sum_{\forall \delta_i^t \in A^t}{\delta_i^t[h^t]} \geq \Lambda_m^i
\end{align}
the algorithm will find sub set of RB and corresponding MCS across $T$ which meet $\Lambda_m^i$ as long as equation~(\ref{theorem_proof})
%\Chengzhang{(10)}
is met.
\end{proof}
\section{Performance Bound}\label{performance_bounds}
As proven in Theorem \ref{NP-hardness-proof}, the MaRS problem is NP-hard and it is not feasible to find a polynomial-time optimal solution. Therefore, it is vital to develop an upper bound for the objective function defined in Exp. \eqref{optimization problem}. This upper bound can be used as a benchmark to measure the performance of the scheduling algorithm that we presented in section V.

When $R$, the maximum number of resource blocks in time $T$ is given, our problem aims to find a subset of $R$ for each user. Choice of this subset depends on $c$, the MCS selected for them. Note that, if we want to maximize Exp. \eqref{optimization problem}, we need to find a subset which contains least amount of RBs. 

Since we want to find an upper bound for objective function in Exp.~(\ref{optimization problem})
, let us consider a fictitious scenario of excellent channel condition for every user in time $T$. Therefore, every resource block $r \in R$, can support the maximum MCS value that a user can support during its allocation. That is, 
\begin{align}
    q_{u^i_m}^{\max}=\max_{r,t}\  q_{u^i_m}^{r,t}.
\end{align}
We further consider that the MVNO scheduler always uses the maximum MCS for each user. That is,
\begin{align}
    d_{u^i_m}^{r,c,t} = q_{u^i_m}^{\max} 
\end{align}

We then proceed with the allocation of the resource blocks for each user following the slicing list, $\mathcal{L}^t$. In this fictitious scenario, the data rate achievable for each user in time $T$, is directly proportional to the number of resource blocks allocated to it at time $T$. Therefore, we can re-write constraints~\eqref{data_rate_constraint_for_each_user},~\eqref{Maximum_datarate_allowed_for_each_MVNO} as,

\begin{align}
    & \sum_{t \in T} \sum_{r \in R} x^{r,i, t}_{m} \times d_{u^i_m}^{r,c,t} \geq  u^{i}_{m}\Lambda_m^i, \forall m, i \label{data_rate_constraint_for_each_user_max}\\
    & \sum_{t \in T} \sum_{r \in R}  \sum_{u^i_m \in U_m} x^{r, i,t}_m \times d_{u^i_m}^{r,c,t} \leq \overline{\Lambda_m} ,\forall m \label{Maximum_datarate_allowed_for_each_MVNO_max}
\end{align}

We can see that, the criterion to meet each user's minimum data rate requirement completely depends on number of resource blocks allocated to it. Since we assume the maximum MCS level for each resource block, any allocation of the resource block with MCS 
$d_{u^i_m}^{r,c,t}$, for a user to meet $\Lambda_m^i$ would use least amount of resource blocks. Therefore, if we use the least amount of resource blocks for every user, we can find the maximum users that can be supported by the given set of resource blocks $R$ for time slot $T$.

In this section, we developed a very intuitive based upper performance bound for the MaRS algorithm. In the following section, we perform simulations of this upper bound and compare the performance of MaRS algorithm with it.

\section{Performance Evaluation}\label{performance_evaluation}
In this section, we assess the performance of the MaRS algorithm proposed in Section VI.
%\Chengzhang{Section VI}
We evaluate MaRS algorithm in terms of its ability to achieve our objective function of maximizing the users served by varying various 5G network parameters. We use the upper bound developed in section VII as the benchmark for this purpose. 

\paragraph{Network Setting} We simulate an 5G NR base station deployed in a certain environment serving $\mathcal{N}$ number of users. This BS and user deployment can be modeled using any standard approaches such as hexagonal,  square lattice or stochastic geometry based Poisson Point process~\cite{6576422}. We consider this BS to be operating as frequency division duplexing(FDD) system with a channel bandwidth of 20MHz, which is divided into 1200 subcarriers organized into $R_{tot} = $100 RB.
%\Chengzhang{Is this RB in the previous sections?}
while considering sub-carrier spacing of 15Khz. Each PRB represents the minimum scheduling unit and consists of 12 subcarriers and 14 symbols.

For each user in the network, the expected channel condition (in terms of MCS) is randomly chosen. For each MCS $c$, the modulation and the coding rate $v^{c,t}$ is obtained from~\cite{spectral_efficiency}. 

\textbf{Configurable Parameters.} There are many configurable system parameters such as the time slot $T$, number of MVNO $M$, users of each MVNO $u^i_m$, minimum data rate for each user $\lambda_m^i$, the maximum throughput allocated per MVNO slice by NO $\Lambda_m$. We evaluate the performance of MaRS algorithm under various combination of these settings. 

The number of users served in the time slot $T$ would be the sole performance metric for all our simulation settings. 

\paragraph{Results} In this section, we evaluate the proposed algorithm against the upper bound against varying parameters.

\textbf{Varying Channel Propagation.} We first evaluate the performance of MaRS algorithm under channels with varying LOS signal strength. We assume Rician fading channel with no frequency and time correlation.

\begin{figure}[h!]
    \centering
 {{\includegraphics[width=8.5cm]{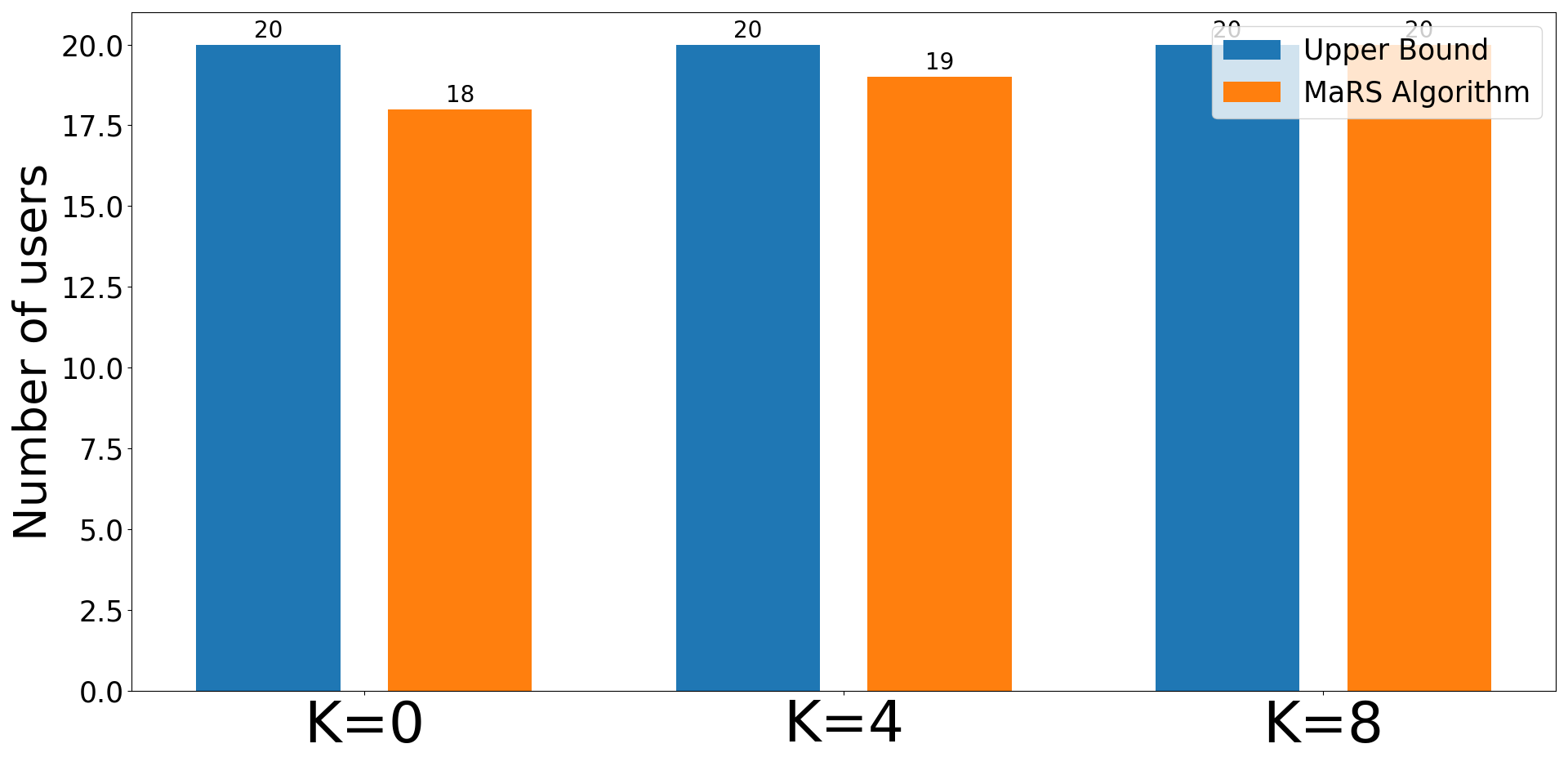} }}%
    \caption{MaRS algorithm Performance under different Rician factors.}%
    \label{fig:varying_channel_propogation}%
\end{figure}

Fig.~\ref{fig:varying_channel_propogation} compares the performance of MaRS algorithm against the upper bound for different Rician factor $K$. The configuration used for the experiments is listed in Table.~\ref{table:simulation_varying_channel}. 
\begin{table}[h!]
\centering
\begin{tabular}{|c|| c|} 
 \hline
 Time Slot, $T$ & 5\\
 \hline
 Number of MVNOs, $M$   & 2\\
 \hline
 Number of users per MVNO, $u_m^i$  & 10\\
 \hline
 Minimum Throughput required per user, $\Lambda_m^i$ & 50 Mb/Slot\\
 \hline
 Maximum allowed throughput per MVNO slice, $\overline{\Lambda_m}$ & 500 Mb/slot\\
 \hline
\end{tabular}
\caption{Simulation Parameters - varying channel propagation.}
\label{table:simulation_varying_channel}
\vspace{-0.1in}
\end{table}

Under this configuration, we can see that MaRS algorithm can achieve near-optimal performance. In particular, when the Rician factor $K=0$ (i,e. Rayleigh fading), 4 and 8, the number of users served by the MaRS algorithm is within ~10\% of the respective upper bound. For $K=8$, the performance of MaRS algorithm is as good as the upper bound. This can be attributed towards the higher availability of resources that can be allocated to fewer number of users.

\textbf{Varying Time Slot $T$.} We now evaluate MaRS algorithm by varying the slice time slot. Increasing the time slot $T$, increases the available resources to meet slice requirements per $T$. Therefore, for this experiment, we increase the minimum data rate requirement for each user while also increasing the maximum available throughput. We consider Rayleigh fading to model the channel and generate MCS values for each user. 

\begin{table}[h!]
\centering
\begin{tabular}{|c|| c|} 
 \hline
 $T$ & 20,50,100\\
 \hline
 $M$ & 2\\
 \hline
 $u_m^i$ & 10\\
 \hline
 $\Lambda_m^i$ & 100 Mb/Slot\\
 \hline
 $\overline{\Lambda_m}$ & 5 Gb/slot\\
 \hline
\end{tabular}
\caption{Simulation Parameters - varying slice slot time $T$.}
\label{table:simulation_changing_TTIs}
\vspace{-0.1in}
\end{table} 

\begin{figure}[h!]
    \centering
 {{\includegraphics[width=8.5cm]{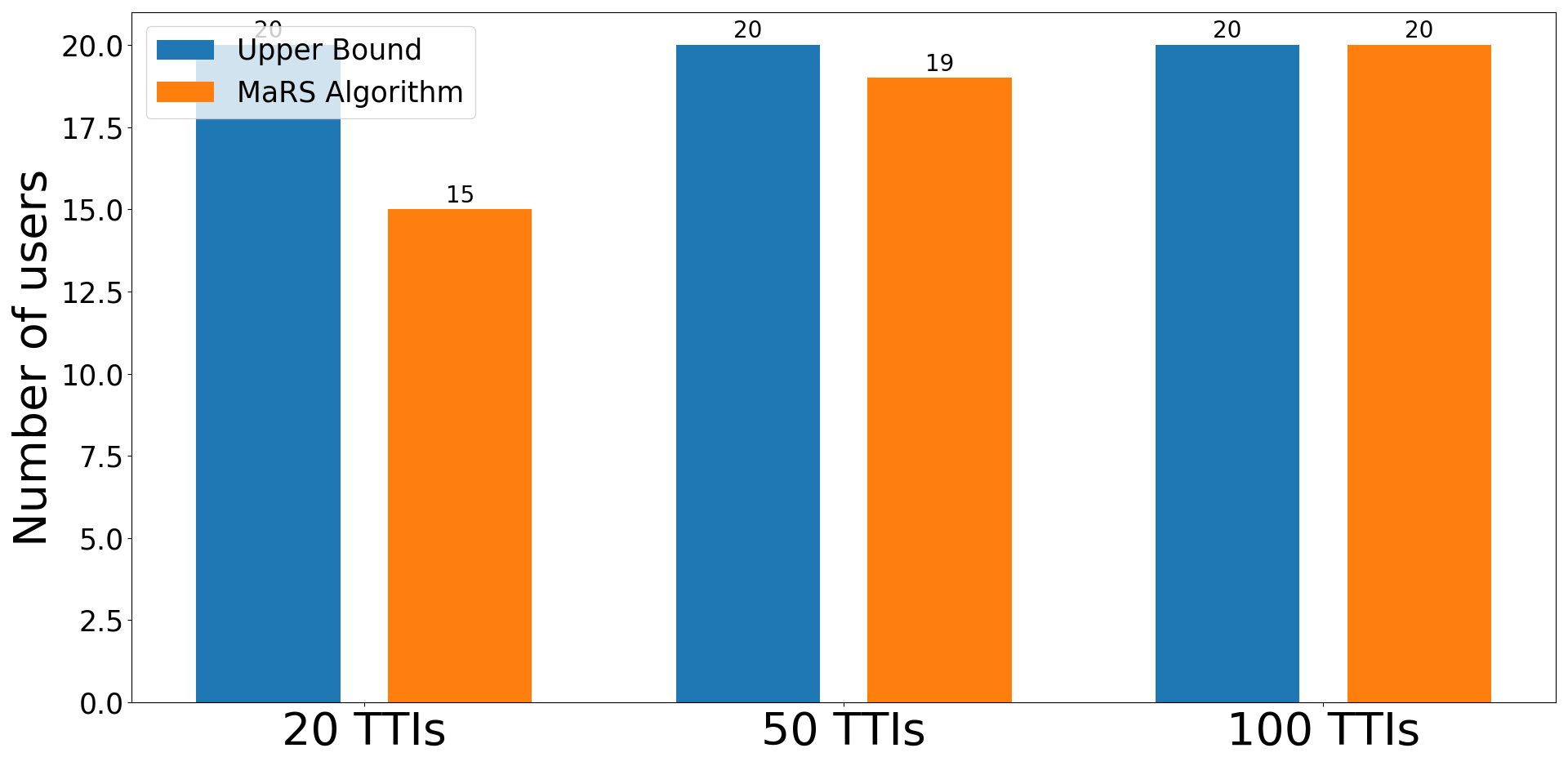} }}%
    \caption{MaRS algorithm performance under different Time Slot $T$.}%
    \label{fig:varying_changing_TTIs}%
    \vspace{-0.1in}
\end{figure}

Fig.~\ref{fig:varying_changing_TTIs} shows the performance of the MaRS algorithm in comparison with the upper bound. Clearly increasing the number of available resources, increases the performance of the MaRS algorithm. This is evident in Fig \ref{fig:varying_changing_TTIs} for $T=100$ TTIs, where MaRS algorithm catches up with the upper bound in terms of number of users served across MVNOs.

\textbf{Varying other simulation parameters.} We now vary other system parameters to evaluate MaRS algorithm performance. We understand the behaviour of MaRS algorithm by considering three scenarios of the network configurations as shown in Table~\ref{table:simulation_changing_parameters} for our simulations. We assume Rayleigh fading channels for all the simulations.

\begin{table}[h!]
\centering
\begin{tabular}{|c||c|c|c|} 
\hline
&Scenario 1 & Scenario 2 & Scenario 3\\[0.5ex]
\hline
 $T$& 50 & 50 & 50\\
 \hline
 $M$   & 3 & 2 & 3\\
 \hline
 $u_m^i$ & 15 & 10 & 5\\
 \hline
 $\Lambda_m^i$ &  100 Mb/Slot & 100 Mb/Slot & 50 Mb/Slot\\
 \hline
$\overline{\Lambda_m}$ & 5 Gb/slot & 5 Gb/slot& 5 Gb/slot\\
 \hline
\end{tabular}
\caption{Simulation Parameters - 3 network scenarios.}
\label{table:simulation_changing_parameters}
\vspace{-0.1in}
\end{table} 

\begin{figure}[h!]
    \centering
 {{\includegraphics[width=8.5cm]{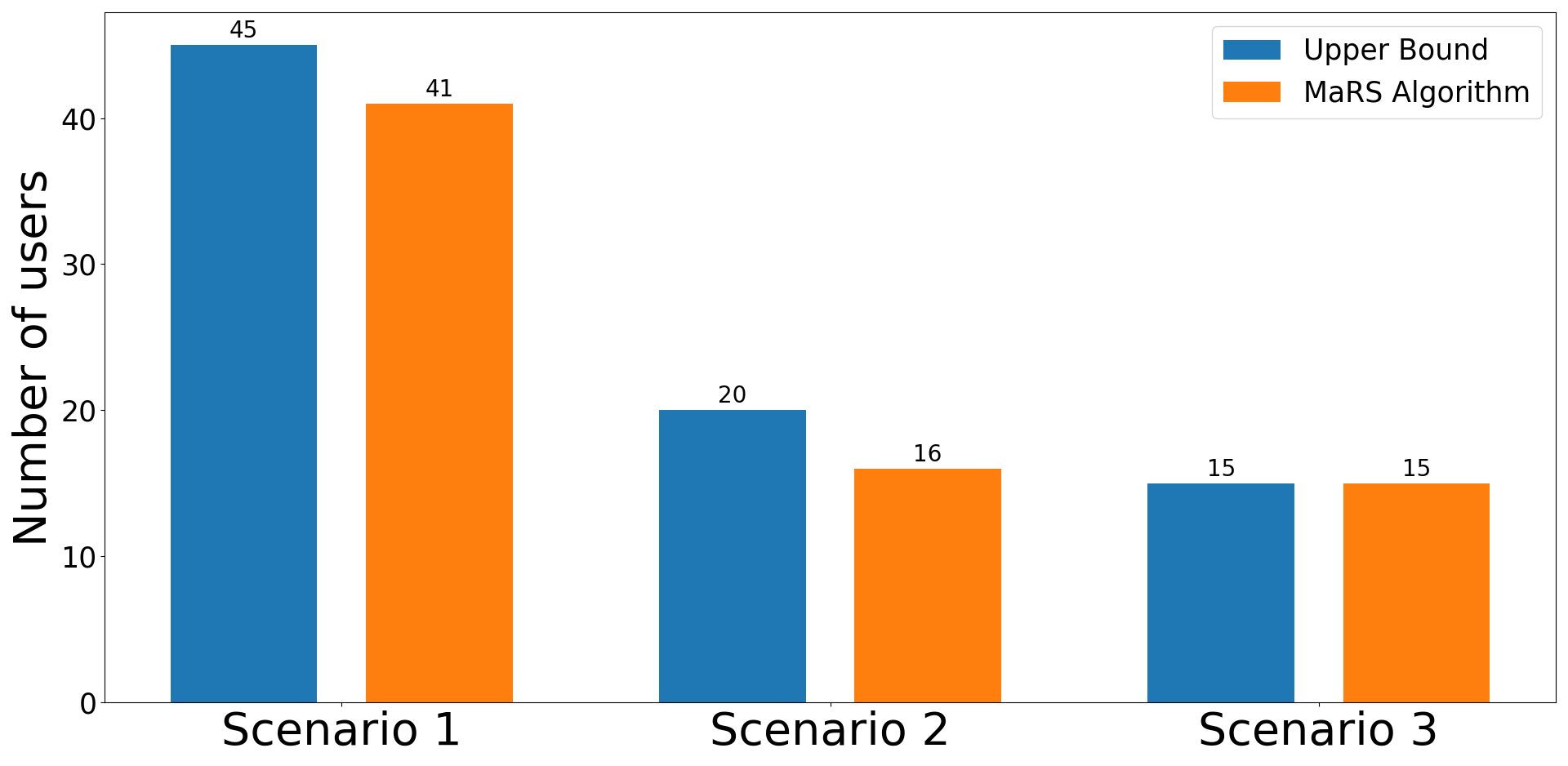} }}%
    \caption{MaRS algorithm performance comparison for 3 scenarios.}%
    \label{fig:varying_system}%
    \vspace{-0.2in}
\end{figure}

\begin{figure}%
    \centering
 {{\includegraphics[width=8.5cm]{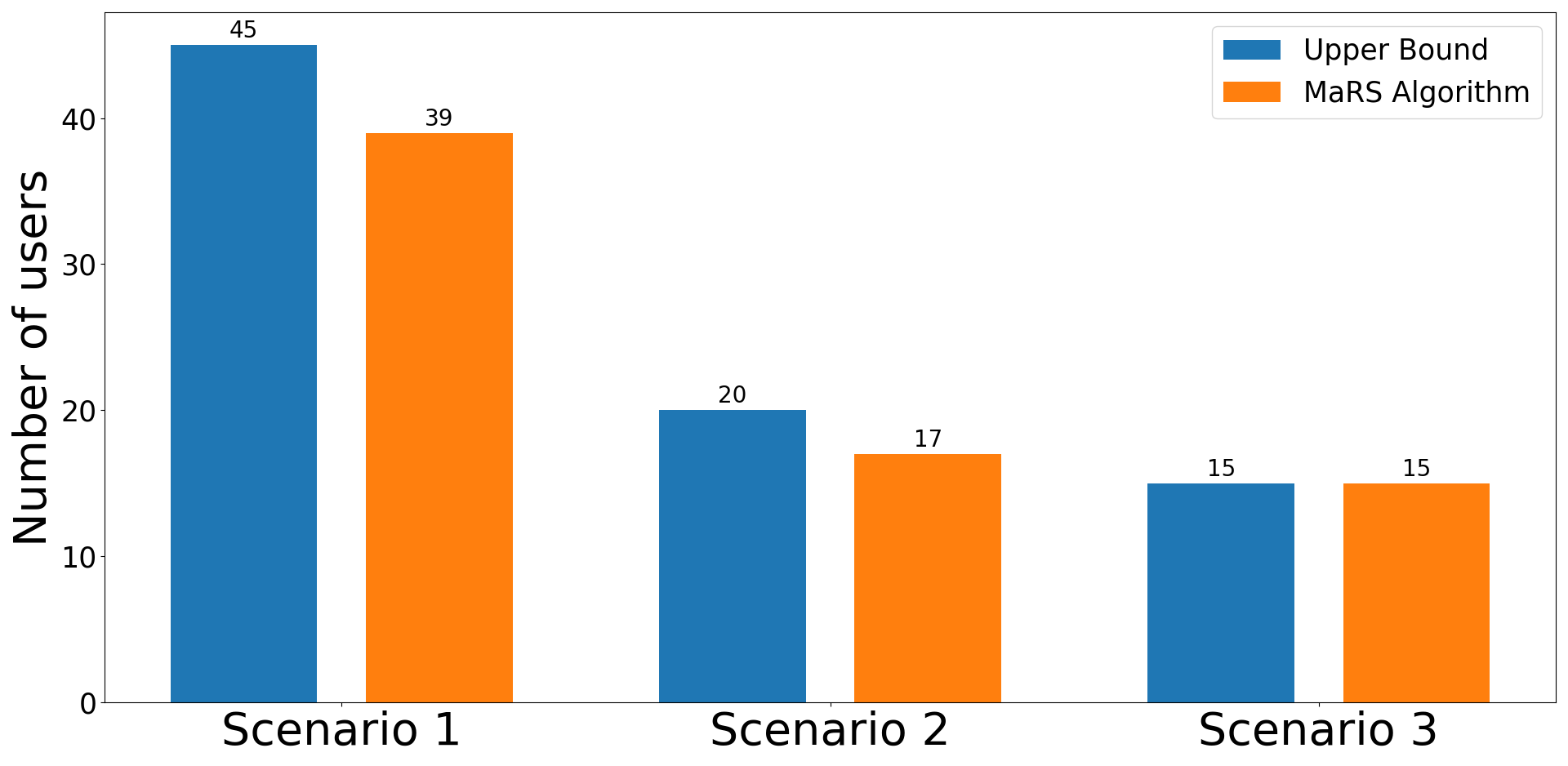} }}%
    \caption{MaRS algorithm performance comparison for 3 scenarios with random data rate for each user.}%
    \label{fig:varying_data_rate}%
    \vspace{-0.1in}
\end{figure}

In Fig.~\ref{fig:varying_system}, Scenario 1 represents a network scenario where there are many users with high minimum data rate requirement and few resources to allocate them. Here, we can see that MaRS algorithm is within ~5\% of the upper bound. In Scenario 2, we decrease the load on the base station by reducing the number of MVNOs and users. Even in this case, we can see MaRS algorithm achieves near-optimal performance. Finally in Scenario 3, where the number of resource blocks are plenty, we see that MaRS algorithm performs as well as the upper bound. Further, we also tried varying the data rate requirement for each user in the network under these 3 scenarios(Fig~\ref{fig:varying_data_rate}).We choose a random data rate for each user between 10 Mb/Slot to 150 Mb/slot, the results obtained is similar to the previous case where the data rate is fixed.

\textbf{Fast changing channel.} Until now, we have considered time correlation for each user in the network where the channel conditions remains constant for each user in time slot $T$. We now consider a network scenario where the channel conditions for each user changes at each TTI. We still assume Rayleigh fading channels with no frequency correlation. Table \ref{table:simulation_changing_MCS} shows the settings used for this evaluation.

\begin{table}[h!]
\centering
\begin{tabular}{|c|| c|} 
 \hline
 $T$ & 20,50,100\\
 \hline
  $M$   & 2\\
 \hline
 $u_m^i$ & 30\\
 \hline
 $\Lambda_m^i$ & 10 Mb/Slot\\
 \hline
 $\overline{\Lambda_m}$ & 250 Mb/slot\\
 \hline
\end{tabular}
\caption{Simulation Parameters - fast changing channel.}
\label{table:simulation_changing_MCS}
\end{table} 

\begin{figure}[h!]
    \centering
 {{\includegraphics[width=8.5cm]{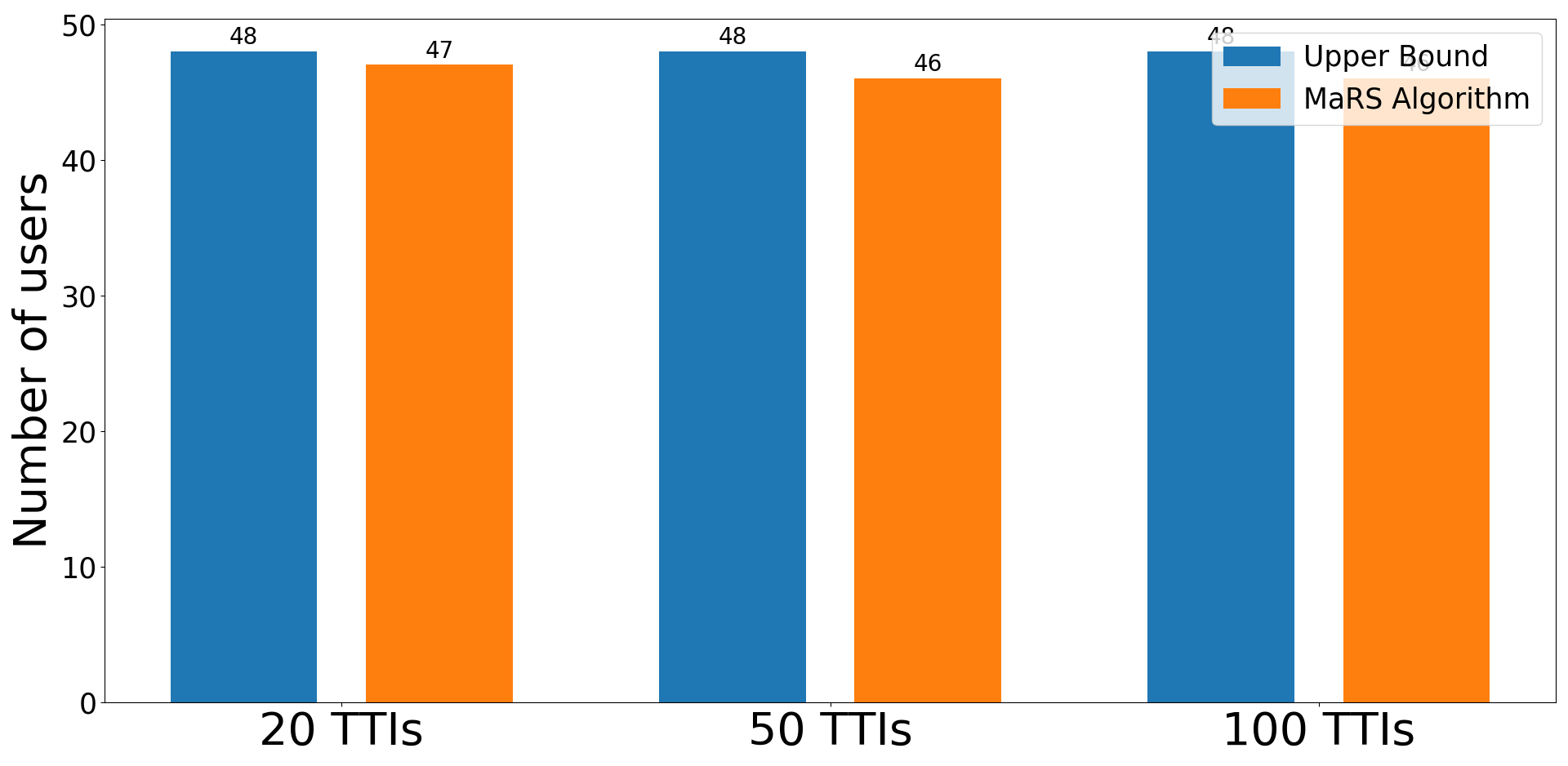} }}%
    \caption{MaRS algorithm performance for fast changing channels.}%
    \label{fig:varying_mcs}%
    \vspace{-0.2in}
\end{figure}

Fig~\ref{fig:varying_mcs}, represents the obtained results. We can see that MaRS algorithm performance is with in ~5\% of the upper bound. As mentioned in the earlier section, we have developed the MaRS algorithm and evaluated its performance for near real-time and non-real-time configuration of the Ran Intelligent Controller (RIC) in O-RAN architecture. By demonstrating that MaRS algorithm's performance is near optimal, we can say MaRS algorithm is a viable option for deployment for non-real-time and near-real-time RIC. 

\textbf{RB Utilization. } Finally, we evaluate the performance of MaRS algorithm in terms of number of RB utilized to serve the users across all MVNOs in the networks. We say a user is served when its minimum data rate is met at time slot $T$.
We measure the number of RB utilized to serve users in 3 scenarios presented earlier under different MCS selection criterion. MCS selection criterion: 
\begin{itemize}
    \item \textit{Maximum MCS: } We assume that each RB in $T$ for a user can support the maximum MCS.
    \item \textit{Average MCS: } We calculate the average MCS level for a user across $T$ and assume that each RB in $T$ supports this average value.
    \item \textit{Lowest MCS: } We calculate the lowest MCS level for a user across $T$ and assume that each RB in $T$ can only support the lowest value.
\end{itemize}
\begin{figure}%
    \centering
 {{\includegraphics[width=8.5cm]{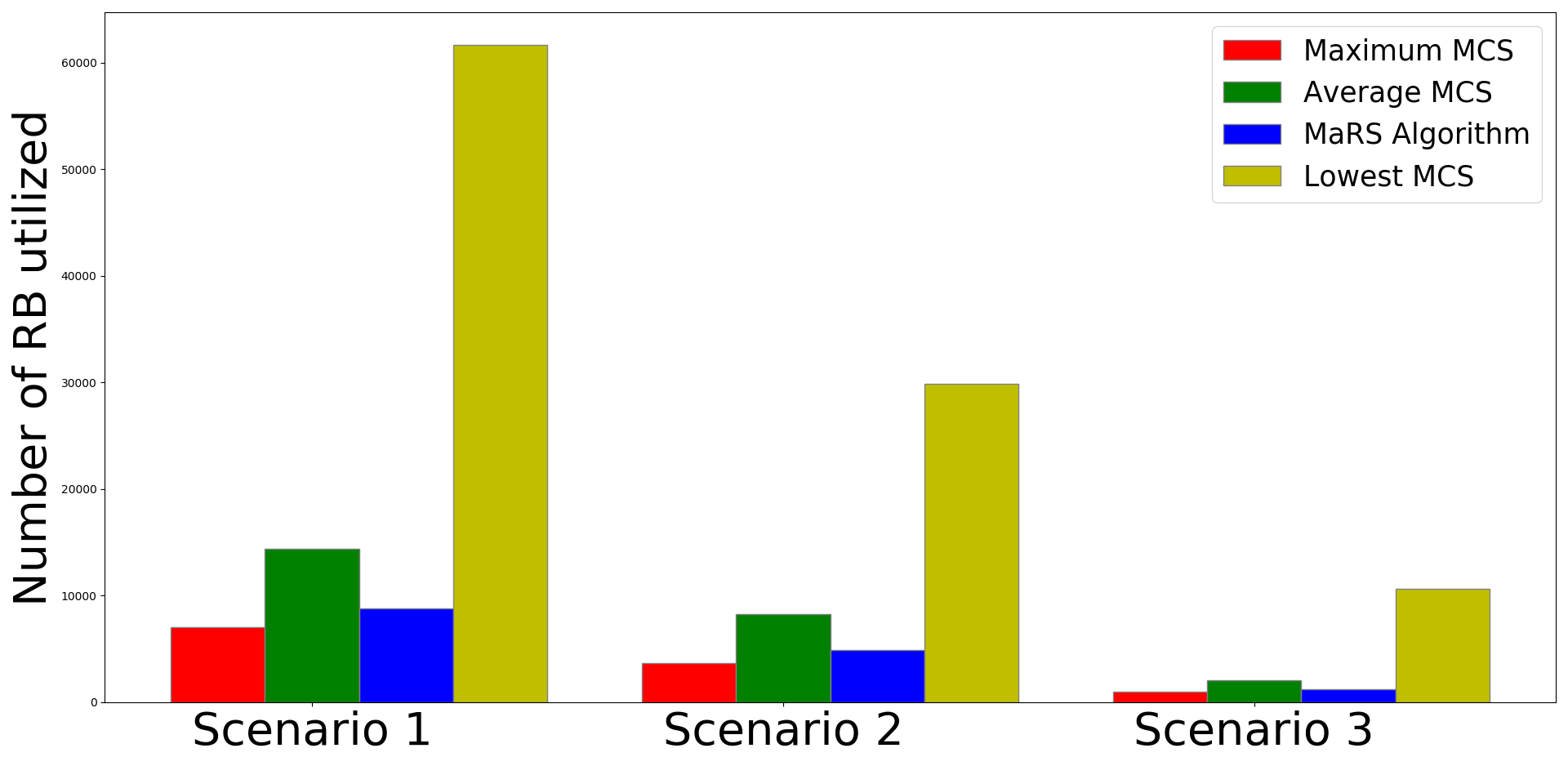} }}%
    \caption{Comparison of MaRS algorithm against static allocation algorithms for RB utilization.}%
    \label{fig:RB_utilized}%
\end{figure}
Fig~\ref{fig:RB_utilized} shows the obtained results. It is evident that the Maximum MCS selection criteria uses the least amount of resource blocks to serve users. This is understandable as we assume the best channel conditions for all RBs. But, there may be significant re-transmissions which would increase latency.
However, The performance of MaRS algorithm out performs Average MCS and Lowest MCS selection criteria. There is a significant decrease in the number of resource blocks used to serve the users using MaRS algorithm when compared to these criteria. Therefore, using MaRS algorithm we can serve more users in a time slot $T$ than using Lowest MCS and Average MCS static algorithms. 
\section{CONCLUSIONS}\label{conclusions}
In this paper, we investigated the problem of RAN slicing in multi-MVNO environment with varied users having minimum data rate requirement as a specification for the users. First, we discussed the SD-RAN architecture and discussed its operation flow. Then, we formulated the MCS-aware RAN Slicing (MaRSP) problem as optimization problem with an objective function to increase the number of supported users at each time slot $T$. We proved that MaRSP problem is NP-Hard. Next, we developed the novel MCS-aware RAN Slicing (MaRS) algorithm where we maximize the data rate for each user at each TTI and assign resources to it based on a greedy paradigm. We also showed that the MaRS algorithm has a polynomial time complexity. Following that, we developed a upper performance bound for the MaRS algorithm by considering no frequency and time correlation. Finally, we carry out thorough evaluation of the MaRS algorithm under various network and channel scenarios. Results conclude that the proposed slicing algorithm achieves near-optimal performance when compared with the upper bound. Through various simulation settings, we have also shown that MaRS algorithm is easily scalable. In compliance with the O-RAN architecture, we have seen through results that MaRS algorithm can be applied to non-real-time and near-real-time RIC deployments. Using RB utilization as a metric, we have compared the performance of MaRS algorithm with other static allocation algorithms. We see that MaRS algorithm outperforms many static allocation algorithms by using least amount of resource blocks to serve minimum data rate requirement for each user.

% \subsection{Time Complexity} Algorithm \ref{MaRSP-Algorithm} has a polynomial-time running time complexity of $O(|M| log (|M|) + O(|M| \times |C| \times $|B| \times $|R_b|)$. The first part is because of the sort operation in Line \ref{} in the algorithm. The second part is because of the four loops in Lines \ref{} - \ref{}. 
\bibliography{main.bib}
\bibliographystyle{unsrt}

\end{document}